\newtheorem{theorem}{Theorem}
\newtheorem{lemma}{Lemma}
\theoremstyle{definition}
\def\subtitle#1{\gdef\@subtitle{#1}}
\def\@subtitle{}
\begin{document}\setlength{\arraycolsep}{2pt}

\title[]{\textbf{Locally unitary quantum state evolution is local}}

\author[Heikkilä]{Matias Heikkilä}
\email{Matias Heikkilä: mapehe@iki.fi}

\begin{abstract}
  We study the localization properties of bipartite channels, whose action on a
  subsystem yields a unitary channel. In particular we show that, under such
  channel, the subsystem must evolve independent of its environment. This point
  of view is another way to verify certain well-known conservation laws of
  quantum information in a generalized way. A no-go theorem for non classical
  conditional semantics in quantum computation is obtained as an intermediate
  result.
\end{abstract}

\maketitle

\section{Introduction}
Two important features of the standard formulation of the quantum theory are
the reversible dynamic evolution of isolated systems and quantum nonlocality.
Reversibility of physical processes, formally expressed as unitarity of state
transformations, is closely related to the compatibility of quantum mechanics
with other physical theories, the black hole information paradox being a
well-known example of this dissonance \cite{Hawking1976}. Quantum nonlocality,
on the other hand, has prompted several foundational contributions, including
the well-known EPR paradox \cite{Einstein1935}, and Bell's later argument
\cite{Bell1964} against certain hidden-variable theories. In this paper, we
seek to clarify the connection between these two topics, by showing, in
particular, that unitarity implies locality of a physical process. This also
turns out to be a simple point of view to verify some of the several
conservation laws of quantum information, several of which have been developed
in the literature \cite{Wootters1982} \cite{Caves1996} \cite{Pati2000}
\cite{Braunstein2007}.

The formal notion of a physical process is a quantum channel, a completely
positive, trace-preserving linear map on the state space. In
\cite{Beckman2001}, the authors propose a set of localization properties of
these maps from relativistic considerations, and the relationship between these
classes of quantum channels is further clarified in \cite{Eggeling2002}. In
this paper, we further apply the methods of \cite{Eggeling2002} to obtain the
present results.

In a bipartite system, entangled states \cite{Horodecki2009} allow for a
nonclassical destination for local information: As a simple example, consider a
unitary channel, that transforms a separable pure state to a maximally
entangled \cite{Horodecki2001} state. Entanglement is a major source a
nontriviality, and gives rise to many interesting nonclassical phenomena such
as steering \cite{Uola2020}, already discussed in Schrödinger's early works
\cite{Schrodinger1935} \cite{Schrodinger1936}, and the more recently discovered
state teleportation \cite{Bennett1993} and superdense coding
\cite{Bennett1984}. Even more recently, entanglement has been identified as a
quantum resource \cite{Chitambar2019}, and analogues of entanglement witnesses
have been studied \cite{Araujo2015} \cite{Hoban2015} in the context of
generalizations of quantum theory where the causal order between various
operations is not defined \cite{Oreshkov2012} \cite{Chiribella2013}
\cite{Procopio2015} \cite{Zych2019} \cite{Kristjnsson2020} \cite{Milz2022}.

One often considers less than general physical processes such as local, LOCC
\cite{Chitambar2014} or LOSR \cite{Buscemi2012} \cite{Schmid2020} channels.
Sufficiently strong localization properties rule out the possibility of a
separable state to become entangled. As the localization properties of quantum
channels influence their ability to create entanglement, there is some
conceptual justification why localization properties should also be related to
the local conservation of information, and, hence, some intuitive support for
the present results.

This paper is organized as follows: We present our results in Section
\ref{section::presentation}. Concluding remarks are given in Section
\ref{section::conclusion}. The technical material, including the proofs,
is appears in Appendix \ref{section::main}.

\section{On unitarity and locality}
\label{section::presentation}

Consider a finite-dimensional complex Hilbert space $\mathcal{H}$ with
$\braket{\psi}{\lambda \varphi + \mu \phi}= \lambda \braket{\psi}{\varphi} +
\mu \braket{\psi}{\phi}$ and $\braket{\psi}{\varphi} =
\overline{\braket{\varphi}{\psi}}$. A positive operator $\rho$ is a
\emph{state} if $\tr \left[ \rho \right] = 1$. Recall the canonical notation
\cite{Dirac1927}: For $\psi, \varphi \in \mathcal{H}$, $\ketbra{\psi}{\varphi}$
denotes the operator $\xi \mapsto \braket{\varphi}{\xi} \psi$, and one may
interpret $\ket{\psi} \in \mathcal{H}$, $\bra{\varphi} \in \mathcal{H}^*$. In
particular, $\ketbra{\psi}{\psi}$ is the projection onto the one-dimensional
subspace spanned by $\psi$. A completely positive, linear and trace-preserving
mapping on the operators of $\mathcal{H}$ is called a \emph{channel}. Quantum
channels describe the physically realizable ways for the state of a system to
evolve. A channel $\mathcal{N}$ is a unitary channel, if $\mathcal{N}
\left(\rho \right) = U \rho U^*$ for some unitary operator $U$. A channel
$\mathcal{N}$ has a dual $\mathcal{N}^*$ via $\tr \left[ \mathcal{N} \left(
\rho \right) E \right] = \tr \left[\rho \mathcal{N}^* \left( E \right)\right]$.
These points of view are referred to as the Schrödinger picture and the
Heisenberg picture, and $\mathcal{N}^*$ is occasionally called Heisenberg dual
of $\mathcal{N}$. A bipartite channel $\mathcal{N}$ on $\mathcal{H}_A \otimes
\mathcal{H}_B$ is local, if $\mathcal{N} = \mathcal{N}_A \otimes \mathcal{N}_B$
for some channels $\mathcal{N}_A$ and $\mathcal{N}_B$ on $\mathcal{H}_A$ and
$\mathcal{H}_B$ respectively.

There are several theorems that describe the manner in which quantum
information is preserved in physical processes. A well-known example of such
result is the \emph{no-cloning theorem} \cite{Wootters1982}: There is no
quantum channel $\mathcal{N}$ that satisfies
\begin{equation}
  \label{eq::cloning}
\mathcal{N} \left(\rho \otimes \xi \right) = \rho \otimes \rho
\end{equation}
for all states $\rho$ on $\mathcal{H}_A$ and $\xi$ some fixed state on
$\mathcal{H}_B$. A more general version of this is the \emph{no-broadcasting
theorem} \cite{Caves1996} i.e. the fact that there is no channel $\mathcal{N}$
such that
$$
\mathcal{N} \left(\rho \otimes \xi \right) = \tilde \rho,
$$
with $\rho = \tr_A \left[ \tilde \rho \right] = \tr_B§ \left[ \tilde \rho
\right]$. In this paper, we first prove the following, which implies both of
these theorems in a finite dimension.

\begin{theorem}
  \label{theorem::theorem}
  Let $\mathcal{H}_A$ and $\mathcal{H}_B$ be finite-dimensional. Consider a
  fixed pure state $\xi$ on $\mathcal{H}_B$ and a channel $\mathcal{N}$ on
  $\mathcal{H}_A \otimes \mathcal{H}_B$. Define a channel $\mathcal{N}_A$ on
  $\mathcal{H}_A$ via
$$
\mathcal{N}_A \left( \rho \right) = \tr_B \left[ \mathcal{N} \left( \rho
  \otimes \xi \right) \right].
$$
  If $\mathcal{N}_A$ is unitary, then
$$
  \mathcal{N} \left( \rho \otimes \xi \right) = \mathcal{N}_A \left(
  \rho \right) \otimes \mathcal{N}_B \left( \xi \right)
$$
  for some channel $\mathcal{N}_B$ on $\mathcal{H}_B$.
\end{theorem}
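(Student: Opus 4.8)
The plan is to reduce the statement to a clean lemma about channels from $A$ to $A\otimes B$ and then prove that lemma with a Stinespring dilation followed by a linearity argument. First I would observe that $\Phi(\rho):=\mathcal{N}(\rho\otimes\xi)$ is itself a channel from $\mathcal{H}_A$ to $\mathcal{H}_A\otimes\mathcal{H}_B$, being the composition of the preparation $\rho\mapsto\rho\otimes\xi$ (completely positive and trace preserving since $\xi$ is a fixed state) with $\mathcal{N}$, and that $\mathcal{N}_A=\tr_B\circ\Phi$. The goal then becomes the following: if $\tr_B\circ\Phi$ is the unitary channel $\rho\mapsto U\rho U^*$, then $\Phi(\rho)=U\rho U^*\otimes\sigma$ for a single fixed state $\sigma$ on $\mathcal{H}_B$. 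Granting this, one takes $\mathcal{N}_B$ to be the constant channel with value $\sigma$, so that $\mathcal{N}(\rho\otimes\xi)=\mathcal{N}_A(\rho)\otimes\mathcal{N}_B(\xi)$, which is exactly the claim (the case $\dim\mathcal{H}_A=1$ being trivial, so I assume $\dim\mathcal{H}_A\geq 2$).

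To prove the lemma I would fix a Stinespring isometry $V:\mathcal{H}_A\to\mathcal{H}_A\otimes\mathcal{H}_B\otimes\mathcal{H}_E$ with $\Phi(\rho)=\tr_E[V\rho V^*]$, so that $\tr_{BE}[V\rho V^*]=U\rho U^*$. For a unit vector $\psi$ the output $V\ket{\psi}\bra{\psi}V^*$ is a pure state on $ABE$, while its $A$-marginal $U\ket{\psi}\bra{\psi}U^*$ is again pure because $U$ is unitary. Since a pure bipartite state with a pure marginal must be a product (its Schmidt rank across the cut is one), I obtain across the $A:BE$ cut a factorization $V\ket{\psi}=U\ket{\psi}\otimes\ket{\phi_\psi}$ for a unit vector $\ket{\phi_\psi}\in\mathcal{H}_B\otimes\mathcal{H}_E$, where $\ket{\phi_\psi}$ is uniquely determined once the first tensor factor is normalized to be exactly $U\ket{\psi}$ (this fixes the phase ambiguity).

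The main obstacle, and the heart of the argument, is showing that $\ket{\phi_\psi}$ is independent of $\psi$; this is where linearity of the isometry $V$ is essential. For linearly independent $\psi_1,\psi_2$ I would expand $V(\ket{\psi_1}+\ket{\psi_2})$ in two ways — once through the per-input factorization of $\ket{\psi_1}$ and $\ket{\psi_2}$, and once through that of their sum — and then use that $U\ket{\psi_1},U\ket{\psi_2}$ are linearly independent (as $U$ is invertible) to match coefficients in the tensor product, which forces $\ket{\phi_{\psi_1}}=\ket{\phi_{\psi_2}}$. A short connectivity remark (passing through a third vector independent of both) extends equality to linearly dependent inputs, so $\ket{\phi_\psi}\equiv\ket{\phi}$ is a single fixed vector. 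Consequently $V\ket{\psi}=U\ket{\psi}\otimes\ket{\phi}$ for all $\psi$, hence $V\rho V^*=U\rho U^*\otimes\ket{\phi}\bra{\phi}$, and tracing out the environment yields $\Phi(\rho)=U\rho U^*\otimes\sigma$ with $\sigma=\tr_E[\ket{\phi}\bra{\phi}]$, which completes the proof.
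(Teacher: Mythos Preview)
Your argument is correct. Both your proof and the paper's hinge on the same phenomenon---that the Stinespring isometry of the effective channel $\rho\mapsto\mathcal{N}(\rho\otimes\xi)$ factors as $U_A\otimes(\text{environment})$---but the routes to this factorization differ. The paper works in the Heisenberg picture: it dilates the full bipartite channel $\mathcal{N}^*$, restricts via $V_\nu:\psi\mapsto\psi\otimes\nu$ to obtain a dilation of $\mathcal{N}_A^*$, and then invokes the abstract uniqueness of Stinespring dilations (Lemma~\ref{lemma::decompose}) to compare it against the trivial one-dimensional dilation of the unitary channel. You instead dilate the Schr\"odinger-picture channel $\Phi:A\to AB$ directly and establish the factorization by the elementary observation that a pure state with a pure marginal is a product, followed by a linearity argument to pin down the environment vector. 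Your approach is more self-contained and avoids both the Heisenberg-picture bookkeeping and the separate lemma; the paper's approach, on the other hand, yields an explicit formula $\mathcal{N}_B^*(F_B)=\tr_A[\mathcal{N}^*(I_A\otimes F_B)]/d$ for the second factor in terms of $\mathcal{N}$ itself, whereas you simply take $\mathcal{N}_B$ to be the constant channel with value $\sigma$---which is perfectly adequate here since the statement only constrains $\mathcal{N}_B(\xi)$.
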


As unitary transformations can be described as information-preserving, an
informal way to put Theorem \ref{theorem::theorem} is that if a physical
process on a composite system $AB$ preserves quantum information in system $A$,
there can be no transfer of information between the systems $A$ and $B$. This
is consistent with the fact that a unitary channel and its conjugate are
dynamically independent (see Example 4.41 \cite{Heinosaari2011}). There is also
a closely related result by Busch \cite{Busch2003}, according to which a
bipartite unitary channel that doesn't create any entangled states is
necessarily of a similar product form as $\mathcal{N}$ in Theorem
\ref{theorem::theorem}, or of a product form with an added application of SWAP
gate.

Note that the proposed cloning channel in Equation \eqref{eq::cloning}
satisfies 
$$
\rho = \tr_B \left[ \mathcal{N} \left( \rho \otimes \xi \right) \right],
$$
which corresponds to Theorem \ref{theorem::theorem} with $\mathcal{N}_A$ the
identity map. Then, according to Theorem \ref{theorem::theorem}, the final
state of the environment cannot depend on the state of this system, which is a
property it would need to have in order to clone or broadcast the state.
Theorem \ref{theorem::theorem} is then, in this sense, a generalization of both
results.

Note that it is still possible to clone orthogonal states via an extension of
the action
$$
\ket{x} \otimes \ket{0} \mapsto \ket{x} \otimes \ket{x},
$$
with $\left\{ \ket{x} \right\}$ an orthonormal basis of $\mathcal{H}_A$. The
partial trace of this, however, is the unconditional output state of a Lüders
instrument of a sharp observable associated with the orthonormal basis $\left\{
  \ket{x} \right\}$, which is not unitary. This is then at odds with Theorem
\ref{theorem::theorem}.

A natural continuation of Theorem \ref{theorem::theorem} is to ask, in what manner
are unitary channels corresponding to different choices of $\xi$ compatible.
Regarding that we prove the following:

\begin{theorem}
  \label{theorem::no_conditional}
  Let $\mathcal{H}_A$ and $\mathcal{H}_B$ be finite-dimensional and let
  $\mathcal{N}$ be a channel on $\mathcal{H}_A \otimes \mathcal{H}_B$. Let
  $\xi_1$ and $\xi_2$ be non-orthogonal pure states on $\mathcal{H}_A$. Define
  channels $\mathcal{N}_{A, 1}$ and $\mathcal{N}_{A, 2}$ on $\mathcal{H}_A$ via
$$
  \mathcal{N}_{A, i} \left( \rho \right) = \tr_B \left[ \mathcal{N} \left( \rho
  \otimes \xi_i \right) \right],
$$
for $i = 1, 2$. If both $\mathcal{N}_{A, 1}$ and $\mathcal{N}_{A, 2}$ are
  unitary, then $\mathcal{N}_{A, 1} = \mathcal{N}_{A, 2}$.
\end{theorem}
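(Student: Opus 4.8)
The plan is to use Theorem \ref{theorem::theorem} to put each conditional channel into product form, and then to extract enough rigidity from a \emph{single} global Kraus representation of $\mathcal{N}$ to force the two unitaries to agree up to a phase. Write $\xi_i = \ketbra{\beta_i}{\beta_i}$ with $\beta_i \in \mathcal{H}_B$ and $\braket{\beta_1}{\beta_2} \neq 0$. Since each $\mathcal{N}_{A,i}$ is unitary, say $\mathcal{N}_{A,i}(\rho) = U_i \rho U_i^*$, Theorem \ref{theorem::theorem} gives
$$
\mathcal{N}\left( \rho \otimes \xi_i \right) = U_i \rho U_i^* \otimes \sigma_i
$$
for some fixed states $\sigma_i$ on $\mathcal{H}_B$, $i = 1,2$. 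Fix once and for all a Kraus representation $\mathcal{N}(X) = \sum_k K_k X K_k^*$, and recall that trace preservation gives $\sum_k K_k^* K_k = I$.

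The key step, and the one I expect to be the main obstacle, is to promote this product identity to a pointwise statement about the fixed operators $K_k$. For each $i$, the channel $\rho \mapsto \mathcal{N}(\rho \otimes \xi_i)$ is completely positive with Kraus operators $K_k (I_A \otimes \ket{\beta_i})$, while the product channel $\rho \mapsto U_i \rho U_i^* \otimes \sigma_i$ has Kraus operators $\psi \mapsto U_i \psi \otimes \phi_l^{(i)}$, where $\sigma_i = \sum_l \ketbra{\phi_l^{(i)}}{\phi_l^{(i)}}$. Two Kraus representations of the same channel differ by an isometry, so after taking the corresponding linear combinations I obtain vectors $\chi_k^{(i)} \in \mathcal{H}_B$ with
$$
K_k \left( \psi \otimes \beta_i \right) = U_i \psi \otimes \chi_k^{(i)} \qquad \text{for all } \psi \in \mathcal{H}_A, \ i = 1,2.
$$
The care required here is essentially bookkeeping: one matches the (possibly different) numbers of Kraus operators by padding with zeros before invoking the unitary freedom.

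With this in hand the conclusion follows from a two-way evaluation of $\sum_k \braket{K_k(\psi \otimes \beta_1)}{K_k(\phi \otimes \beta_2)}$. On one side, $\sum_k K_k^* K_k = I$ collapses the sum to $\braket{\psi \otimes \beta_1}{\phi \otimes \beta_2} = \braket{\psi}{\phi}\braket{\beta_1}{\beta_2}$. On the other, the pointwise identity yields $c\,\braket{U_1 \psi}{U_2 \phi}$ with $c = \sum_k \braket{\chi_k^{(1)}}{\chi_k^{(2)}}$. Equating these and using $\braket{U_1 \psi}{U_2 \phi} = \braket{\psi}{U_1^* U_2 \phi}$ for all $\psi, \phi$ forces the operator identity $c\, U_1^* U_2 = \braket{\beta_1}{\beta_2}\, I$.

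Here the non-orthogonality hypothesis is exactly what is needed: $\braket{\beta_1}{\beta_2} \neq 0$ guarantees $c \neq 0$, so $U_1^* U_2$ is a nonzero scalar multiple of the identity, and unitarity of $U_1$ and $U_2$ forces that scalar to be a phase. Hence $U_2 = e^{i\alpha} U_1$, the phase cancels in $U_2 \rho U_2^*$, and $\mathcal{N}_{A,1} = \mathcal{N}_{A,2}$. It is worth noting that orthogonal $\xi_1, \xi_2$ genuinely escape this argument, since the scalar relation becomes vacuous; this is consistent with the cloning of an orthonormal basis discussed after Theorem \ref{theorem::theorem}.
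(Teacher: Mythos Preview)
Your proof is correct, and at heart it is the same computation as the paper's, carried out in the Kraus picture rather than the Stinespring picture. The paper fixes a dilation $\langle\mathcal{H}_{ABE},\pi_E,V\rangle$ of $\mathcal{N}^*$, recalls from the proof of Theorem~\ref{theorem::theorem} the factorization $VV_{\nu_i}=(I_A\otimes U_{\nu_i})\tilde U_{A,i}$, and then evaluates $V_{\nu_1}^*V^*VV_{\nu_2}$ two ways (using $V^*V=I$) to obtain $\braket{\nu_1}{\nu_2}\,I_A=\tilde U_{A,1}^*(I_A\otimes U_{\nu_1}^*U_{\nu_2})\tilde U_{A,2}$ and hence $U_{A,1}=\lambda U_{A,2}$ with $|\lambda|=1$. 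Your identity $K_k(\psi\otimes\beta_i)=U_i\psi\otimes\chi_k^{(i)}$ is the Kraus analogue of that factorization, and your two-way evaluation of $\sum_k\braket{K_k(\psi\otimes\beta_1)}{K_k(\phi\otimes\beta_2)}$ using $\sum_k K_k^*K_k=I$ is the Kraus analogue of using $V^*V=I$. The only structural difference is that you invoke Theorem~\ref{theorem::theorem} to get the product form before appealing to Kraus freedom, whereas the paper reuses the Stinespring machinery (Lemma~\ref{lemma::decompose}) directly; your route avoids the dilation apparatus at the cost of introducing a dependence on Theorem~\ref{theorem::theorem} that the paper's proof does not strictly need.
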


We note that, from the point of view of quantum computing, this can be seen as
a no-go theorem prohibiting non-classical conditional statements: In classical
computing, a simple conditional statement involves a boolean controller and two
possible branches, one of which is executed depending on the value of the
controller. A quantum analogue of this is a controlled gate \cite{Barenco1995},
which can be constructed as follows: Consider a finite-dimensional complex
Hilbert space $\mathcal{H}_A$ of a physical system, and a similarly defined
controller $\mathcal{H}_B$. Define the operators
$$
A_x = U_x \otimes \ketbra{x}{x},
$$
with $U_x$ some unitary operators on $\mathcal{H}_A$ and $\left\{ \ket{x}
\right\}$ some orthonormal basis on $\mathcal{H}_B$. Since $\sum A_x^*A_x = I$,
these $A_x$ are the Kraus operators of some quantum channel $\mathcal{N}$,
which we may interpret as the conditional execution of one of the programs
$U_x$. In other words, the classical analogue of the channel
$$
\rho \mapsto \tr_B \left[ \mathcal{N} \left( \rho \otimes \ketbra{x}{x} \right)
\right] = U_x \rho U^*_x,
$$
is a conditional statement that chooses to execute one $U_x$ depending on
the value of $x$, i.e. the state of the controller.

It is, however, a basic fact that the set of the pure states on $\mathcal{H}_B$
is more complicated than the set of states of a  classical $n$-bit register due
to the superposition structure. It's then a natural question to consider
whether this would allow for non-classical conditional semantics in quantum
algorithms. In other words, letting $\xi$ be any pure state on $\mathcal{H}_B$,
is it possible to construct the channel $\mathcal{N}$ so that the expressions
$\tr_B \left[ \mathcal{N} \left( \rho \otimes \xi \right) \right]$ with
nonorthogonal $\xi$ induce different unitary channels on $\mathcal{H}_A$?
Theorem \ref{theorem::no_conditional} answers this in negative, as it shows
that a $2^n$-dimensional quantum controller can support at most $2^n$ different
conditional routines $U_x$, as for any superposition of two basis vectors,
either both basis vectors represent the same choice of $U_x$ or $\tr_B \left[
  \mathcal{N} \left( \rho \otimes \xi \right) \right]$ fails to be unitary. In
this sense, the power of a $n$-qubit controller must then coincide with that of
the classical $n$-bit register.

Note that Theorem \ref{theorem::no_conditional} is another generalization of
the no-cloning theorem, in the sense that if one could use non-orthogonal
states $\xi_1$ and $\xi_2$ to induce channels on $\mathcal{H}_A$ corresponding
to freely chosen unitaries $U_1$ and $U_2$, such process could be used to
perfectly distinguish those states. It is, however, stricter as it not only
rules out the free choice of $U_1$ and $U_2$ but \emph{any} $U_1$ and $U_2$
that would induce different unitary channels on $\mathcal{H}_A$.

Finally, a combination of Theorems \ref{theorem::theorem} and
\ref{theorem::no_conditional} is that there is no information exchange between
the systems or creation of correlation unless non-unitary state evolution
is involved:

\begin{theorem}
  \label{theorem::theorem3}
  Let $\mathcal{H}_A$ and $\mathcal{H}_B$ be finite-dimensional. Consider a channel
  $\mathcal{N}$ on $\mathcal{H}_A \otimes \mathcal{H}_B$.
  If each channel $\mathcal{N}_A$
$$
  \mathcal{N}_{A, \xi} \left( \rho \right) = \tr_B \left[ \mathcal{N} \left(
  \rho \otimes \xi \right) \right],
$$
  with $\xi$ a pure state on $\mathcal{H}_B$, is unitary, then
$$
  \mathcal{N} = \mathcal{N}_A \otimes \mathcal{N}_B 
$$
  for a fixed unitary channel $\mathcal{N}_A$ on $\mathcal{H}_A$ and a channel
  $\mathcal{N}_B$ on $\mathcal{H}_B$.
\end{theorem}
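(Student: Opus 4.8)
The plan is to bootstrap from Theorems \ref{theorem::theorem} and \ref{theorem::no_conditional}, which already control the behaviour of $\mathcal{N}$ on inputs of the form $\rho \otimes \xi$ with $\xi$ pure, and then to upgrade this to an identity of channels on all of $\mathcal{B} \left( \mathcal{H}_A \otimes \mathcal{H}_B \right)$ by linearity.

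First I would show that all the unitary channels $\mathcal{N}_{A, \xi}$ coincide. If $\dim \mathcal{H}_B = 1$ there is a single pure state and the claim is immediate, so assume $\dim \mathcal{H}_B \geq 2$. Given two pure states $\xi, \xi'$, if they are non-orthogonal then Theorem \ref{theorem::no_conditional} gives $\mathcal{N}_{A, \xi} = \mathcal{N}_{A, \xi'}$ directly; if they are orthogonal, I would interpose the pure state proportional to $\xi + \xi'$, which is non-orthogonal to both, and apply Theorem \ref{theorem::no_conditional} twice along this chain. Hence there is a single fixed unitary channel $\mathcal{N}_A$ with $\mathcal{N}_{A, \xi} = \mathcal{N}_A$ for every pure $\xi$.

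Next, for each pure $\xi$ Theorem \ref{theorem::theorem} yields $\mathcal{N} \left( \rho \otimes \xi \right) = \mathcal{N}_A \left( \rho \right) \otimes \sigma \left( \xi \right)$ for some state $\sigma \left( \xi \right)$ on $\mathcal{H}_B$ that, crucially, does not depend on $\rho$. This lets me define a candidate second factor by $\mathcal{N}_B \left( \tau \right) = \tr_A \left[ \mathcal{N} \left( \rho_0 \otimes \tau \right) \right]$ for a fixed state $\rho_0$. I would then check that the $\rho$-independence above, together with linearity and the fact that pure states span $\mathcal{B} \left( \mathcal{H}_B \right)$, makes this definition independent of the choice of $\rho_0$, so that $\mathcal{N} \left( \rho \otimes \xi \right) = \mathcal{N}_A \left( \rho \right) \otimes \mathcal{N}_B \left( \xi \right)$ holds for all $\rho$ and all pure $\xi$. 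Being a composition of the completely positive maps $\tau \mapsto \rho_0 \otimes \tau$, $\mathcal{N}$, and $\tr_A$, and preserving the trace, the map $\mathcal{N}_B$ is a channel.

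Finally I would remove the restriction to product inputs. The operators $\rho \otimes \xi$, with $\rho$ ranging over states of $\mathcal{H}_A$ and $\xi$ over pure states of $\mathcal{H}_B$, span $\mathcal{B} \left( \mathcal{H}_A \otimes \mathcal{H}_B \right)$: density operators span the Hermitian part of $\mathcal{B} \left( \mathcal{H}_A \right)$ and hence all of $\mathcal{B} \left( \mathcal{H}_A \right)$ over $\mathbb{C}$, rank-one projections do the same for $\mathcal{B} \left( \mathcal{H}_B \right)$, and the span of elementary tensors drawn from two spanning sets is the whole tensor product. On this spanning set the two linear maps $\mathcal{N}$ and $\mathcal{N}_A \otimes \mathcal{N}_B$ agree by the previous step, so they agree everywhere, giving $\mathcal{N} = \mathcal{N}_A \otimes \mathcal{N}_B$. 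The main obstacle I anticipate is precisely this last passage from product to entangled inputs: Theorems \ref{theorem::theorem} and \ref{theorem::no_conditional} say nothing directly about entangled inputs, and the whole strength of the conclusion rests on the observation that the product inputs already linearly span the full operator space, so that no separate argument about entangled states is required.
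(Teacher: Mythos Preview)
Your proposal is correct and follows essentially the same route as the paper's proof: eliminate the $\xi$-dependence of $\mathcal{N}_{A,\xi}$ via Theorem~\ref{theorem::no_conditional} with a chaining argument for the orthogonal case, invoke Theorem~\ref{theorem::theorem} to obtain the product form on inputs $\rho\otimes\xi$, and extend to all of $\mathcal{B}(\mathcal{H}_A\otimes\mathcal{H}_B)$ by linearity using that such products span. One small slip to fix: for orthogonal $\xi=\ketbra{\nu}{\nu}$ and $\xi'=\ketbra{\nu'}{\nu'}$ the operator $\xi+\xi'$ is rank two, hence not a pure state---what you want is the pure state associated with the unit vector $(\nu+\nu')/\sqrt{2}$, which is non-orthogonal to both and makes the chain work.
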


This is related to Theorem 7 of \cite{Beckman2001}, according to which a
unitary semicausal channel is local. In particular, if $\mathcal{N}$ is unitary
semicausal, then $\mathcal{N}_{A, \xi}$ is unitary and independent of $\xi$,
and then, by Theorem \ref{theorem::theorem}, of the desired product form.
Theorem \ref{theorem::theorem3} proves another direction: If the local
evolution of the system is unitary for all states of the environment, these
systems evolve independently, and, in particular, the evolution of the
composite state is semicausal.

A consequence of Theorem \ref{theorem::theorem3} is that if Bob, preparing the
state on $\mathcal{H}_B$, has any sort of influence on the state of Alice's
system $\mathcal{H}_A$, it is always possible for Bob to construct a state that
forces Alice's system to evolve in a non-unitary manner: Even if
$\mathcal{N}_A$ was unitary for an orthogonal set of states on $\mathcal{H}_B$,
it can never be unitary for all of states, unless the state evolution on
$\mathcal{H}_A$ is completely independent of the state on $\mathcal{H}_B$. One
way to interpret this is that, under any circumstance where $\mathcal{H}_B$ has
influence on $\mathcal{H}_A$, it's not possible to avoid the transmission of
noise from Bob's system to Alice's if the state of the controller $\mathcal{H}_B$
is improperly prepared.

\section{Concluding remarks}
\label{section::conclusion}

Theorems \ref{theorem::theorem} and \ref{theorem::theorem3} are another
addition to the collection of conservation laws related to quantum information.
An interesting feature of these theorems is that they are not a no-go theorems
per se, rather they characterize the physically possible time evolutions under
certain circumstances. Some no-go theorems are then produced by noting that
certain types of state evolution are not of the permitted form. An interesting
direction for future work would be to further clarify the relation of the
present work to various no-go theorems not considered here. Another interesting
question is whether these theorems have a valid generalization to the infinite
dimension. That discussion would likely require more refined analysis however,
as the definition of the channel $\mathcal{N}^*_B$ in the proof of Theorem
\ref{theorem::theorem} doesn't make sense in the infinite dimension without
some additional work.

Theorem \ref{theorem::no_conditional} can either be viewed as an intermediate
step between  Theorems \ref{theorem::theorem} and \ref{theorem::theorem3}, or
it can be considered in isolation as a statement regarding the limits of
quantum algorithms. Such results are of immediate interest as quantum
computation is a nascent technology that has attracted significant interest
due to the developments during the past years, such as \cite{Arute2019}.

\section*{Acknowledgements}

The author wishes to thank Oskari Kerppo for his valuable feedback on an
earlier version of this manuscript.

\bibliographystyle{plain}
\bibliography{mybib}

\appendix

\section{Proofs}
\label{section::main}

In the context of quantum channels, the Stinespring dilation of a channel
$\mathcal{N}$ can be understood in the Heisenberg picture as
\begin{equation}
\label{eq::stinespring}
\mathcal{N}^* \left( F \right) = V^* \left( F \otimes I_E \right) V,
\end{equation}
for linear maps $F$ on $\mathcal{H}_A$, which involves some bounded operator $V
: \mathcal{H}_A \to \mathcal{H}_A \otimes \mathcal{H}_E$, $V^*V = I_A$ and the
unital *-homomorphism $\pi_E: F \mapsto F \otimes I_E$. A triple $\left\langle
\mathcal{H}_A \otimes \mathcal{H}_E, \pi_E, V \right\rangle$ such that the
Equation \eqref{eq::stinespring} holds is a Stinespring dilation of
$\mathcal{N}^*$.

In the finite-dimensional setting, a Stinespring dilation is called
\emph{minimal} if the vectors $\left(F \otimes I_E \right) V \psi$ span
$\mathcal{H}_A \otimes \mathcal{H}_E$. If $\left\langle \mathcal{H}_A \otimes
\mathcal{H}_{E_1}, \pi_{E_1}, V_1 \right\rangle$ is a minimal Stinespring
dilation of $\mathcal{N}^*$ and $\left\langle \mathcal{H}_A \otimes
\mathcal{H}_{E_2}, \pi_{E_2}, V_2 \right\rangle$ is another Stinespring
dilation of the same channel, the relation
\begin{equation}
  \label{eq::tilde_u}
\tilde{U} \left( F \otimes I_{E_1}\right) V_1 \psi = \left( F \otimes
I_{E_2}\right) V_2 \psi,
\end{equation}
for all $\psi \in \mathcal{H}_A$ and linear maps $F$ on $\mathcal{H}_A$,
produces a well-defined isometry $\tilde{U}: \mathcal{H} \otimes E_1 \to
\mathcal{H} \otimes E_2$ that satisfies the relation $\tilde{U} \left(F \otimes
1_{E_1} \right) = \left(F \otimes 1_{E_2} \right) \tilde{U}$ for all $F$ (see
the proof of Theorem 7.2 \cite{Busch2016}). As pointed out \cite{Eggeling2002},
we have the following. A proof is given for reference.

\begin{lemma}
  \label{lemma::decompose}
  The isometry $\tilde{U}$ in Equation \eqref{eq::tilde_u} can be decomposed as
  $\tilde{U} = I \otimes U_E$, with $U_E: \mathcal{H}_{E_1} \to
  \mathcal{H}_{E_2}$ an isometry. And, in particular, $V_2 = \left( I_A \otimes
  U_E \right) V_1$.
\end{lemma}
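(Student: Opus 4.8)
The plan is to prove that the intertwining isometry $\tilde{U}$, which commutes with all operators of the form $F \otimes I$, must act trivially on the $\mathcal{H}_A$ factor and nontrivially only on the environment spaces. This is essentially a commutant argument: the algebra $\{F \otimes I_{E_1} : F \text{ linear on } \mathcal{H}_A\}$ generates $B(\mathcal{H}_A) \otimes \mathbb{C} I_{E_1}$, whose commutant in $B(\mathcal{H}_A \otimes \mathcal{H}_{E_1})$ is $\mathbb{C} I_A \otimes B(\mathcal{H}_{E_1})$. Since $\tilde{U}$ intertwines the representations $F \mapsto F \otimes I_{E_1}$ and $F \mapsto F \otimes I_{E_2}$, I would like to conclude that $\tilde{U}$ carries the structure of an operator acting only on the environment.

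First I would fix orthonormal bases and exploit the intertwining relation $\tilde{U}(F \otimes I_{E_1}) = (F \otimes I_{E_2})\tilde{U}$ for all $F$. Taking $F = \ketbra{e_j}{e_k}$ for basis vectors $\{e_j\}$ of $\mathcal{H}_A$ lets me probe how $\tilde{U}$ interacts with the matrix units of $B(\mathcal{H}_A)$. Writing $\tilde{U}$ in block form with respect to the decomposition $\mathcal{H}_A \otimes \mathcal{H}_{E_i} \cong \bigoplus_j \mathcal{H}_{E_i}$ indexed by the basis of $\mathcal{H}_A$, the intertwining relation forces the blocks to be independent of the $\mathcal{H}_A$-index in a uniform way. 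Concretely, applying both sides to a vector $e_k \otimes \phi$ and pairing against $e_j \otimes \psi$, I would extract that the matrix element depends only on a single operator $U_E : \mathcal{H}_{E_1} \to \mathcal{H}_{E_2}$, establishing $\tilde{U} = I_A \otimes U_E$. The isometry property of $\tilde{U}$ then transfers directly to $U_E$, since $(I_A \otimes U_E)^*(I_A \otimes U_E) = I_A \otimes U_E^* U_E = I_A \otimes I_{E_1}$ forces $U_E^* U_E = I_{E_1}$.

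For the final identity $V_2 = (I_A \otimes U_E) V_1$, I would specialize the defining relation \eqref{eq::tilde_u} to $F = I_A$, which gives $\tilde{U} V_1 \psi = V_2 \psi$ for all $\psi \in \mathcal{H}_A$; substituting the decomposition $\tilde{U} = I_A \otimes U_E$ yields the claim immediately.

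The main obstacle I anticipate is the bookkeeping in the block-decomposition step: verifying rigorously that the intertwining with \emph{all} matrix units $\ketbra{e_j}{e_k}$ (not merely the diagonal projections) is strong enough to pin down $\tilde{U}$ to the single-tensor-factor form $I_A \otimes U_E$, rather than merely block-diagonal. The off-diagonal relations $\tilde{U}(\ketbra{e_j}{e_k} \otimes I_{E_1}) = (\ketbra{e_j}{e_k} \otimes I_{E_2})\tilde{U}$ are precisely what rule out a genuinely $\mathcal{H}_A$-dependent action and collapse all diagonal blocks to a common operator $U_E$; making this collapse explicit and confirming no freedom remains is the delicate point. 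Everything else is a routine consequence of the commutation relation and the isometry condition.
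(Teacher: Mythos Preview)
Your proposal is correct and follows essentially the same approach as the paper: both exploit the intertwining relation $\tilde{U}(F \otimes I_{E_1}) = (F \otimes I_{E_2})\tilde{U}$ and a matrix-element check to force $\tilde{U} = I_A \otimes U_E$, then specialize to $F = I_A$ for $V_2 = (I_A \otimes U_E)V_1$. The only cosmetic difference is that the paper names $U_E$ upfront as the compression $W^*\tilde{U}V$ to a fixed reference vector $\nu \in \mathcal{H}_A$ and verifies the identity using projections $P_{\psi_k}$ (for off-diagonal vanishing) and unitaries $U_i$ with $U_i\nu = \psi_i$ (for diagonal constancy) rather than matrix units $\ketbra{e_j}{e_k}$, but the underlying computation is the same.
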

\begin{proof}
  Fix some unit vector $\nu \in \mathcal{H}$ and define the maps $V:
  \mathcal{H}_{E_1} \to \mathcal{H} \otimes \mathcal{H}_{E_1}$, $V \varphi =
  \nu \otimes \varphi$ and $W: \mathcal{H}_{E_2} \to \mathcal{H} \otimes
  \mathcal{H}_{E_2}$, $W \phi = \nu \otimes \phi$, and claim $\tilde{U} = 1
  \otimes W^* U V$.

  Noting that
  $$
  \braket{\psi_{E_2}}{U \psi_{E_1}} = \sum_{i, j, k, l} \lambda_{i, j}
  \overline{\mu_{k, l}} \braket{\psi_i \otimes \phi_j }{U \psi_k \otimes
  \varphi_l},
  $$
  with $\lambda_{i, j}, \mu_{k, l} \in \mathbb{C}$, $\psi_{E_1} \in
  \mathcal{H}_{E_1}$, $\psi_{E_2} \in \mathcal{H}_{E_2}$ and $\left\{ \psi_i
  \right\}$, $\left\{ \varphi_i \right\}$ and $\left\{ \phi_i \right\}$ some
  orthonormal bases of $\mathcal{H}$, $\mathcal{H}_{E_1}$ and
  $\mathcal{H}_{E_2}$ respectively, in order to prove that,
  $$
  \braket{\psi_{E_2}}{\tilde{U} \psi_{E_1}} = \braket{\psi_{E_2}}{\left( I
  \otimes W^* U V \right) \psi_{E_1}},
  $$
  it suffices to show
  $$
  \braket{\psi_i \otimes \phi_j }{\tilde{U} \psi_k \otimes \varphi_l} =
  \braket{\psi_i \otimes \phi_j }{\left( I \otimes W^* U V \right) \psi_k \otimes
  \varphi_l}.
  $$

  Let $i \neq k$. Since $\tilde{U} \left(F \otimes I_{E_1} \right) = \left(F
  \otimes I_{E_2} \right) \tilde{U}$ for all linear maps $F$ on
  $\mathcal{H}_A$,
  \begin{align*}
    & \braket{\psi_i \otimes \phi_j }{\tilde{U} \psi_k \otimes \varphi_l} =
  \braket{\psi_i \otimes \phi_j }{\left( P_{\psi_{k}} \otimes I \right) \tilde{U}
    \psi_k \otimes \varphi_l} \\ =& \braket{\left( P_{\psi_{k}} \otimes I
  \right)\psi_i \otimes \phi_j }{ \tilde{U} \psi_k \otimes \varphi_l} = 0,
  \end{align*}
  with $P_{\psi_i} = \ketbra{\psi_i}{\psi_i}$, which agrees with $
  \braket{\psi_i \otimes \phi_j }{\left(I \otimes W^* U V \right)
  \psi_k \otimes \varphi_l}$.

  On the other hand for $i = k$,
  \begin{align*}
    & \braket{\psi_i \otimes \phi_j }{\left(I \otimes W^* U V \right)\psi_i
    \otimes \varphi_l} = \braket{W \phi_j }{ U V \varphi_l} \\ = & \braket{\nu
    \otimes \phi_j }{U \left(\nu \otimes \varphi_l\right)} = \braket{\nu
    \otimes \phi_j }{\left( U^*_i \otimes I \right) U\left(\psi_i \otimes
    \varphi_l\right)}  \\ =& \braket{\left( U_i \otimes I\right) \nu \otimes
    \phi_j }{ U\left(\psi_i \otimes \varphi_l\right)} = \braket{\psi_i \otimes
    \phi_j }{ U\left(\psi_i \otimes \varphi_l\right)},
  \end{align*}
  where $U_i$ is some unitary operator such that $U_i \nu = \psi_i$. Then we
  have $\tilde{U} = I \otimes U_E$ and $U_E$ is an isometry since $\tilde{U}$
  is an isometry. The claim $V_2 = \left( I_A \otimes U_E \right) V_1$ now
  follows by setting $F=I_A$ in Equation \eqref{eq::tilde_u}.
\end{proof}

We now have the sufficient preliminaries to prove Theorem \ref{theorem::theorem}.

\begin{proof}[Proof of Theorem \ref{theorem::theorem}]
  Since $\xi$ is a pure state, there is some unit vector $\nu \in
  \mathcal{H}_B$ such that $\xi = \ketbra{\nu}{\nu}$. Define the linear map
  $V_\nu: \mathcal{H}_A \to \mathcal{H}_A \otimes \mathcal{H}_B$ as $V_\nu \psi
  = \psi \otimes \nu$. For any $\psi, \phi \in \mathcal{H}_A$, $\varphi \in
  \mathcal{H}_B$
$$
  \braket{\varphi \otimes \phi}{V_\nu \psi} = \braket{\varphi
  }{\psi}\braket{\phi}{\nu} = \braket{V_\nu^* \left( \varphi \otimes \phi
  \right)}{\psi}.
$$
  The action of $V^*_\nu$ is then $\varphi \otimes \phi \mapsto
  \braket{\nu}{\phi}\varphi$, and we have the identities $V^*_\nu V_\nu =I$ and
  $V^*_\nu F_A V_\nu = F_A \otimes \xi$. For any linear map $F_A$ on
  $\mathcal{H}_A$. In particular, $V^*_\nu V_\nu = I \otimes \xi$ and  $V^*_\nu
  \rho V_\nu = \rho \otimes \xi$ for any state $\rho$ on $\mathcal{H}_A$.

  Consider a Stinespring dilation $\left\langle \mathcal{H}_{ABE}, \pi_{E}, V
  \right\rangle$ of $\mathcal{N}^*$. Then for all linear maps $F_A$ and $F_B$
  on $\mathcal{H}_A$ and $\mathcal{H}_B$, respectively,
\begin{equation}
  \label{eq::otimes_id}
\mathcal{N}^* \left( F_A \otimes F_B \right) = V^* \left( F_A \otimes F_B \otimes
  I_E \right) V.
\end{equation}
  For any state $\rho$ on $\mathcal{H}_A$ and a linear map
  $F_A$ on $\mathcal{H}_A$
  \begin{align*}
    & \tr \left[\rho \mathcal{N}^*_A \left(F_A  \right)  \right] \\
    = & \tr \left[ \mathcal{N}_A \left( \rho \right) F_A \right] \\
    = & \tr \left[ \tr_B \left[\mathcal{N} \left( \rho \otimes \xi
    \right)\right] F_A\right] \\
    = & \tr \left[ \mathcal{N} \left( \rho \otimes \xi \right) \left( F_A \otimes
    I_B \right) \right] \\
    = & \tr \left[  \left( \rho \otimes \xi \right) \mathcal{N}^* \left(  F_A
    \otimes I_B\right) \right] \\
    = & \tr \left[  \left( V_\nu \rho V^*_\nu \right) V^* \left(  F_A \otimes
    I_{BE}\right) V \right] \\
    = & \tr \left[  \rho \left( V V_\nu\right)^* \left(  F_A \otimes
    I_{BE}\right) \left( V V_\nu \right) \right]. \\
  \end{align*}
  Since the equalities are valid for any $\rho$, this implies
  $$
  \mathcal{N}^*_A \left( F_A \right) = \left( V V_\nu\right)^* \left(  F_A \otimes
    I_{BE}\right) \left( V V_\nu \right).
  $$
  This means, that, given a Stinespring dilation $\left\langle
  \mathcal{H}_{ABE}, \pi_E, V \right\rangle$ of $\mathcal{N}^*$, we can produce
  a Stinespring dilation $\left\langle \mathcal{H}_{ABE}, \pi_{BE}, \tilde
  V_\nu \right\rangle$ of $\mathcal{N}_A^*$ by setting
$$
\tilde V_\nu = V V_\nu.
$$
Assume the channel $\mathcal{N}^*_A$ is unitary $F_A \mapsto U_A^* F_A U_A$.
Then, we can trivially construct a dilation of $\mathcal{N}^*_A$ as
$\left\langle \mathcal{H}_{A} \otimes \mathbb{C}, \pi_{\mathbb{C}}, \tilde
U_A\right\rangle$ by setting
  $$
  \tilde U_A \psi = U_A \psi \otimes 1_\mathbb{C}.
  $$
  Since the vectors $U_A \psi \otimes 1_\mathbb{C}$ clearly span $\mathcal{H}_A
  \otimes \mathbb{C}$, this is a minimal dilation of $\mathcal{N}^*_A$. By
  Lemma \ref{lemma::decompose}, there is then an isometry $U_\nu : \mathbb{C}
  \to \mathcal{H}_{BE}$ such that
$$
\tilde V_\nu = \left( I_A \otimes U_\nu \right) \tilde U_A.
$$
In particular, this implies the following decomposition property for $V$:
$$
V V_\nu = \left( I_A \otimes U_\nu \right) \tilde U_A.
$$

  Denote the dimension of $\mathcal{H}_A$ by $d$ and let $\mathcal{N}^*_B
  \left( F_B \right) = \tr_A \left[ \mathcal{N}^* \left( I_A \otimes F_B
  \right) \right] / d$. Note that
  $$
  \mathcal{N}^*_B \left( I_B \right) =
  \frac{1}{d}\tr_A \left[
    \mathcal{N}^* \left( I_A \otimes I_B \right) \right] =
  \frac{\tr \left[I_A \right]}{d} I_B = I_B,
  $$
  i.e. this map is unital. It is also linear as a composition of linear maps
  and completely positive as a composition of completely positive maps. It is
  then a channel on $\mathcal{H}_B$. We can also calculate, for any linear map
  $F_B$ on $\mathcal{H}_B$,
\begin{align*}
  & \braket{\nu}{\mathcal{N}^*_B \left( F_B \right) \nu} \\
  =& \tr \left[ \xi \mathcal{N}_B^* \left(F_B \right)\right]
  \\
  = & \frac{1}{d} \tr \left[ \xi \tr_A \left[ \mathcal{N}^* \left(I_A
  \otimes F_B \right)\right] \right]
  \\
  = & \frac{1}{d} \tr \left[ \left( I_A \otimes \xi \right) \mathcal{N}^*
  \left(I_A \otimes F_B \right)\right]
  \\
  = & \frac{1}{d} \tr \left[ V_\nu V_\nu^* \mathcal{N}^* \left(I_A
  \otimes F_B \right)\right]
  \\
  = & \frac{1}{d} \tr \left[\left( V V_\nu\right)^*\left(I_A \otimes F_B
  \otimes I_E \right)\left(V V_\nu \right) \right]
  \\
  = & \frac{1}{d} \tr \left[\left(  \left( I_A\otimes  U_\nu\right) \tilde
  U_A\right)^*\left(I_A \otimes F_B \otimes I_E \right)\left(\left( I_A \otimes
  U_\nu\right)\tilde U_A  \right) \right]
  \\
  = & \frac{1}{d} \tr \left[\left(  I_A \otimes  U_\nu \right)^*\left(I_A
  \otimes F_B \otimes I_E \right)\left( I_A\otimes U_\nu  \right) \right]
  \\
  = & \frac{1}{d} \tr \left[ I_A \right] \tr \left[ U_\nu^*\left( F_B
  \otimes I_E \right)U_\nu\right] \\
  =& \braket{1_\mathbb{C}}{U_\nu^*\left( F_B \otimes I_E \right)U_\nu
  1_\mathbb{C}}.
\end{align*}

Then for any unit vector $\psi \in \mathcal{H}_A$ and linear maps $F_A$, $F_B$
on $\mathcal{H}_A$ and $\mathcal{H}_B$, respectively,
\begin{align*}
  & \braket{\psi \otimes \nu}{ \mathcal{N}^* \left( F_A \otimes F_B \right) \psi
  \otimes \nu} \\
  =& \braket{\psi \otimes \nu}{\left( V^* \left( F_A \otimes F_B \otimes I_E
  \right) V \right) \psi \otimes \nu} \\
  = & \braket{V_\nu \psi}{\left( V^* \left( F_A \otimes F_B \otimes I_E
  \right) V \right) V_\nu \psi} \\
  = & \braket{\psi}{\left( \left(V V_\nu \right)^* \left( F_A \otimes F_B \otimes I_E
  \right) \left( V V_\nu \right) \right) \psi} \\
  = & \braket{\psi}{\left( \left( \left( I_A \otimes U_\nu \right) \tilde
  U_A \right)^* \left( F_A \otimes F_B \otimes I_E \right) \left(\left( I_A
  \otimes U_\nu \right) \tilde U_A \right) \right) \psi} \\
  = & \braket{U_A \psi \otimes 1_\mathbb{C}}{\left( \left( I_A \otimes
  U_\nu \right)^* \left( F_A \otimes F_B \otimes I_E \right) \left( I_A
  \otimes U_\nu \right) \right) \left( U_A \psi \otimes
  1_\mathbb{C}\right)} \\
  =&\braket{\psi}{U_A^* F_A U_A \psi}
  \braket{1_\mathbb{C}}{\left(U_\nu^*\left( F_B \otimes I_E \right)
  U_\nu \right) 1_\mathbb{C})} \\
  =&\braket{\psi}{\mathcal{N}_A^* \left( F_A \right)
  \psi}\braket{\nu}{\mathcal{N}_B^* \left( F_B \right) \nu} \\
  = & \braket{\psi \otimes \nu}{ \left( \mathcal{N}_A^* \left( F_A \right)
  \otimes \mathcal{N}_B^* \left( F_B \right) \right) \psi \otimes \nu}.
\end{align*}
And 
$$
\tr \left[\left( P_\psi \otimes \xi \right) \mathcal{N}^* \left( F_A
\otimes F_B \right) \right] = \tr \left[\left( P_\psi \otimes \xi \right) \left(
\mathcal{N}_A^* \left( F_A \right) \otimes \mathcal{N}^*_B \left( F_B \right)
\right) \right],
$$
or, in the dual picture,
$$
\tr \left[\mathcal{N}\left( P_\psi \otimes \xi \right)  \left( F_A \otimes
F_B \right) \right] = \tr \left[\left(\mathcal{N}_A \left( P_\psi \right) \otimes
\mathcal{N}_B \left( \xi \right)\right) \left( F_A \otimes F_B \right) \right].
$$
Since this is holds for any choice of $F_A$ and $F_B$, we have for any pure
state $P_\psi$
$$
\mathcal{N}\left( P_\psi \otimes \xi \right) = \mathcal{N}_A \left( P_\psi
\right) \otimes \mathcal{N}_B \left( \xi \right) .
$$
Then, expressing a general state $\rho$ as a convex combination of pure states
$P_i$,
$$
\mathcal{N}\left( \rho \otimes \xi \right) = \sum_i \lambda_i \mathcal{N}\left(
P_i \otimes \xi \right) = \sum_i \lambda_i \mathcal{N}_A\left( P_i \right)
\otimes \mathcal{N}_B \left( \xi \right) = \mathcal{N}_A\left( \rho \right)
\otimes \mathcal{N}_B \left( \xi \right).
$$
\end{proof}

\begin{proof}[Proof of Theorem \ref{theorem::no_conditional}]
  Let $\nu_1, \nu_2 \in \mathcal{H}_A$ be unit vectors so that $\xi_1 =
  \ketbra{\nu_1}{\nu_1}$ and $\xi_2 = \ketbra{\nu_2}{\nu_2}$. Define the
  linear maps $V_{\nu_1}, V_{\nu_2}: \mathcal{H}_A \to \mathcal{H}_A \otimes
  \mathcal{H}_B$ as $V_{\nu_i} \psi = \psi \otimes {\nu_i}$ for $i = 1, 2$.

  Assume both of the channels $\mathcal{N}_{A,i}$ are unitary, and let
  $U_{A, i}$ be the corresponding unitary operators. Define
  $$
  \tilde U_{A, i} \psi = U_{A, i} \psi \otimes 1_\mathbb{C}.
  $$
  Consider a Stinespring dilation $\left\langle \mathcal{H}_{ABE}, \pi_{E}, V
  \right\rangle$ of $\mathcal{N}^*$. It was explained in the proof of Theorem
  \ref{theorem::theorem} how to produce a Stinespring dilation $\left\langle
  \mathcal{H}_{ABE}, \pi_{BE}, \tilde V_{\nu_i} \right\rangle$ of $\mathcal{N}_A^*$
  by setting
$$
  \tilde V_\nu = V V_{\nu_i} = \left( 1 \otimes U_{\nu_i}\right) \tilde U_{A, i}.
$$
We then get the formula
$$
  \braket{\nu_1}{\nu_2} I_A = V_{\nu_1}^*V^*VV_{\nu_2} = \tilde{U}^*_{A,
  1}\left(I_A \otimes U_{\nu_1}^* U_{\nu_2} \right)\tilde{U}_{A, 2}.
$$
Take $\nu_1, \nu_2$ non-orthogonal, and this rearranges to
$$
  \tilde U_{A, 1} = \frac{U_{\nu_1}^* U_{\nu_2}}{\braket{\nu_1}{\nu_2}}
  \tilde{U}_{A, 2} = \lambda \tilde{U}_{A, 2}.
$$
with the complex scalar $\lambda = U_{\nu_1}^* U_{\nu_2} /
  \braket{\nu_1}{\nu_2}$. Since $\tilde U_{A, 1}^* \tilde U_{A, 1} = I$,
  $\left| \lambda \right| = 1$, i.e. the operators $U_{A, 1}$ and $U_{A, 2}$
  are equal up to a phase factor, and the channels $\rho \mapsto U_{A,
  1}\rho U_{A, 1}^*$ and $\rho \mapsto U_{A, 2}\rho U_{A, 2}^*$
  are equal.
\end{proof}
  
\begin{proof}[Proof of Theorem \ref{theorem::theorem3}]
  Consider non-orthogonal $\xi_1$, $\xi_2$. Since the channels
  $$
  \mathcal{N}_{A, \xi_i} \left( \rho \right) = \tr_B \left[ \mathcal{N} \left(
  \rho \otimes \xi_i \right) \right],
  $$
  are unitary, by Theorem \ref{theorem::theorem}
$$
  \mathcal{N} \left( \rho \otimes \xi_i \right) = \mathcal{N}_{A, i} \left(
  \rho \right) \otimes \mathcal{N}_B \left( \xi_i \right),
$$
  and, since $\xi_1$ and $\xi_2$ are not orthogonal, by Theorem
  \ref{theorem::no_conditional}, $\mathcal{N}_{A, 1} = \mathcal{N}_{A, 2}$. For
  orthogonal $\xi_1, \xi_2$ fix a $\xi_3$ that is not orthogonal to either to
  them, to get $\mathcal{N}_{A, 1} =\mathcal{N}_{A, 3} = \mathcal{N}_{A, 2}$.
  There is then a fixed unitary channel $\mathcal{N}_A$ such that
  $$
  \mathcal{N} \left( \rho \otimes \xi \right) =  \mathcal{N}_A \left( \rho
  \right) \otimes \mathcal{N}_B \left( \xi \right),
  $$
  for all pure states $\xi$ on $\mathcal{H}_B$. Express a general state $\sigma$
  as a convex combination of pure states to get
  $$
  \mathcal{N} \left( \rho \otimes \sigma \right) = \sum_i \lambda_i \mathcal{N}
  \left( \rho \otimes \xi_i \right) = \sum_i \lambda_i\mathcal{N}_A \left( \rho
  \right) \otimes \mathcal{N}_B \left( \xi_i \right)= \mathcal{N}_A \left( \rho
  \right) \otimes \mathcal{N}_B \left( \sigma \right).
  $$
  For any bipartite state $\tau$, there are \cite{Beckman2001}, not
  necessary positive $\mu_i$, such that $\tau = \sum_i \mu_i \rho_i \otimes
  \sigma_i$. Then
  $$
  \mathcal{N} \left( \tau \right) = \sum_i \mu_i \mathcal{N} \left( \rho_i
  \otimes \sigma_i \right) = \sum_i \mu_i\mathcal{N}_A \left( \rho_i \right)
  \otimes \mathcal{N}_B \left( \sigma_i \right)= \left(\mathcal{N}_A \otimes
  \mathcal{N}_B \right) \left( \tau \right).
  $$

\end{proof}

\end{document}